\documentclass[11pt]{article}

\setlength{\oddsidemargin}{.0in}
\setlength{\evensidemargin}{.0in}
\setlength{\textwidth}{6.5in}
\setlength{\topmargin}{-.3in}
\setlength{\headsep}{.20in}
\setlength{\textheight}{9.in}
\raggedbottom


\usepackage{amsmath}
\usepackage{amssymb}
\usepackage{graphicx}
\usepackage{float}
\usepackage{caption}
\usepackage{amsthm}
\usepackage{color}
\definecolor{green}{rgb}{0,0.5977,0}
\usepackage[linesnumbered,algoruled,boxed,lined]{algorithm2e}


\newcommand{\rr}{\mathbb R}


\newcommand{\abs}[1]{\left|{#1}\right|}

\newcommand{\suchthat}{\ | \ }

\newcommand{\genseq}[3]{{#1}_1 {#3} {#1}_2 {#3} \dots {#3} {#1}_{#2}}
\newcommand{\seq}[2]{\genseq{#1}{#2}{,}}


\newcommand{\txt}[1]{\text{#1}}

\newcommand{\stext}[1]{\ \ \ \ \ \text{(#1)}}


\newcommand{\ipnc}[3]{\begin{figure}[H]\begin{center}\includegraphics[scale = {#1}]{#2.pdf}\caption{#3}\end{center}\end{figure}}

\makeatletter 
\g@addto@macro{\@algocf@init}{\SetKwInOut{Parameter}{Parameters}} 
\makeatother










\newcommand{\rp}[1]{\rr\text{P}^{#1}}



\theoremstyle{plain}
\newtheorem{theorem}{Theorem}
\newtheorem{lemma}[theorem]{Lemma}

\theoremstyle{definition}

\numberwithin{theorem}{section}

\usepackage{authblk}
\usepackage{enumitem}

\title{Topological Universality of the Art Gallery Problem}
\author[1]{Jack Stade}
\author[2]{Jamie Tucker-Foltz}
\affil[1]{University of Cambridge. \texttt{jack.stade@gmail.com}}
\affil[2]{Harvard University. \texttt{jtuckerfoltz@gmail.com}}
\begin{document}
\maketitle
\begin{abstract}
    We prove that any compact semi-algebraic set is homeomorphic to the solution space of some art gallery problem. Previous works have established similar universality theorems, but holding only up to homotopy equivalence, rather than homeomorphism, and prior to this work, the existence of art galleries even for simple spaces such as the M\"obius strip or the three-holed torus were unknown. Our construction relies on an elegant and versatile gadget to copy guard positions with minimal overhead. It is simpler than previous constructions, consisting of a single rectangular room with convex slits cut out from the edges. We show that both the orientable and non-orientable surfaces of genus $n$ admit galleries with only $O(n)$ vertices.
\end{abstract}

\section{Introduction}\label{secIntro}

An instance of the \emph{art gallery problem} consists of a polygon $P$ (which we refer to as the \emph{art gallery}), and the objective is to find a finite set of points $G \subseteq P$ (the \emph{guards}) of minimal cardinality such that every point in $P$ is \emph{visible} to some guard, meaning that the line segment between the point and the guard is contained within $P$. This problem was first introduced by Viktor Klee in 1973 and has a long history. In 1986, Lee and Lin \cite{NPHardness} showed that the decision problem of determining whether there exists a configuration with at most $k$ guards is NP-hard, but the problem is not known to be in NP since there is no obvious succinct way to represent the guards' coordinates in a solution (they might have to be irrational, even if all polygon vertex coordinates are rational \cite{IrrationalGuards}). In 2018, Abrahamsen, Adamaszek, and Miltzow \cite{ExistsRHardness} showed that the problem is $\exists\rr$-complete, mostly settling the question of complexity (modulo the longstanding conjectures that $\txt{NP} \subsetneq \exists\rr \subsetneq \txt{PSPACE}$). Approximation algorithms and lower bounds have also been studied; see Bonnet and Miltzow \cite{ApproximationAlgorithm} for a recent overview. 

Aside from complexity aspects, a parallel line of inquiry concerns the topology of the space of solutions. Supposing that gallery $P$ requires exactly $k$ guards at minimum, we let
\[V(P) := \{G \subseteq P \suchthat \abs{G} = k \txt{ and every $p \in P$ is visible to some } g \in G\}.\]
The set $V(P)$ consists of unordered sets of points of cardinality $k$, which can be turned into a metric space using the Hausdorff distance
\[d_H(G_0, G_1) := \max_{i \in \{0, 1\}}\max_{g \in G_i}\min_{g' \in G_{1 - i}}d(g, g'),\]
where $d$ denotes the Euclidean distance on $\rr^2$. Thus, for any art gallery $P$, $V(P)$ is a topological space, so a natural question is, what kinds of topological spaces can occur?

More than a mere mathematical curiosity, this question is relevant to complexity theory due to the connections between $\exists\rr$-hardness and topological universality. The most famous example is Mn\"ev's theorem \cite{Mnev, Richter1995} that any semi-algebraic set is stably-equivalent to the space of point configurations of some order type, inspiring an eventual proof that order type realizability is $\exists\rr$-complete \cite{OrderTypeComplexity}. A \emph{semi-algebraic} set is a finite union $\bigcup_{i = 1}^m S_i$, where each $S_i$ is defined to be the set of points $(\seq{x}{n}) \in \rr^n$ satisfying a finite number of constraints of the form $P(\seq{x}{n}) \geq 0$ or $P(\seq{x}{n}) > 0$, where $P$ is a polynomial. The canonical complete problem for $\exists\rr$ is called ETR (Existential Theory of the Reals), which asks whether a given semi-algebraic set is nonempty. As a consequence of their reduction from ETR to the art gallery problem, Abrahamsen et al.\txt{} \cite{ExistsRHardness} show, for any compact semi-algebraic set $S$, how to construct an art gallery $P$ such that $V(P)$ surjects continuously onto $S$. However, they do not show that the mapping is injective, so this fails to establish universality.

In a recent paper, Bertschinger, El Maalouly, Miltzow, Schnider, and Weber \cite{SOSAMain} show that any compact semi-algebraic set $S$ is homotopy-equivalent to $V(P)$ for some polygon $P$. They leave as an open question whether $P$ can be constructed so that $V(P)$ is not just homotopy-equivalent, but homeomorphic to $S$. Only the following list of spaces are shown to be captured up to homeomorphism:

\begin{itemize}
	\item $k$-clover (obtained by joining $k$ circles at a single point)
	\item $k$-chain (obtained by connecting $k$ circles in a path with $k - 1$ disjoint line segments)
	\item $4k$-necklace (obtained by connecting $4k$ circles in a cycle with $4k$ disjoint line segments)
	\item $k$-sphere
	\item The torus
	\item The 2-holed torus
\end{itemize}

The constructions for these spaces are all based on simple galleries with solution spaces homeomorphic to a circle or an interval, which are combined to give Cartesian products and then given simple additional constraints. However, using these methods it is difficult to obtain more general spaces because the geometry significantly limits the types of constraints that can be used. Thus, prior to this work, homeomorphism universality was unknown even for closed surfaces. Galleries for the real projective plane, Klein bottle, and M\"obius strip were explicitly left as open questions. 

In this work, we settle the question of homeomorphism universality in the affirmative:

\begin{theorem}\label{thmMain}
	For every compact semi-algebraic set $S \neq \emptyset$, there exists a polygon $P$ such that $S$ is homeomorphic to $V(P)$.
\end{theorem}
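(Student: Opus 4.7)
The plan is to reduce the problem to realizing solution sets of conjunctions of the three constraint types $x_i = c$, $x_i + x_j = x_k$, and $x_i \cdot x_j = x_k$, which are equivalent in expressive power to compact semi-algebraic sets via standard manipulations (moving to equations via squares/extra variables, then decomposing polynomials into elementary arithmetic via auxiliary variables). The gallery will then be a single rectangular room with one convex slit per variable cut from its boundary, plus additional constraint gadgets (copy, addition, multiplication) that enforce the required algebraic relations between slit positions. The basic geometric primitive is a convex slit shaped so that the set of guards simultaneously covering its deepest recesses is a single curve parameterized by a real variable $x \in [0,1]$, with the position of the guard along the curve encoding $x$.

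The central new ingredient is the copy gadget: a pair of slits (or one slit plus an auxiliary feature) linked so that the two covering guards are forced to encode the same real value with no residual wiggle room. This rigidity is precisely the step where previous work fell short of homeomorphism, since homotopy equivalence tolerates small continuous slack that a homeomorphism does not. Once the copy gadget is in hand, I would construct addition and multiplication gadgets by composing slits whose joint visibility constraints force the appropriate algebraic relation among three variable-encoding guards. All gadgets are then arranged so that their slits occupy disjoint stretches along the rectangle's edges, with separation (or curtain-like features) guaranteeing that no single guard can unintentionally cover pieces of two different gadgets.

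With the assembly in hand, correctness follows from analyzing the natural projection $\pi : V(P) \to S$ that reads off the encoded variable values. A counting argument fixes the optimal guard count $k$ as the total number of slits. The map $\pi$ is continuous in the Hausdorff metric; surjectivity amounts to producing, for each $s \in S$, the unique guard configuration obtained by placing each slit's guard at the position encoding the corresponding coordinate; injectivity amounts to the rigidity of the gadgets. Since $V(P)$ is compact (a closed subset of the compact space of $k$-element subsets of $P$) and $S$ is Hausdorff, a continuous bijection is automatically a homeomorphism, and we are done.

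I expect the main obstacle to be the design and verification of the copy gadget: it must be rigid enough to force exact equality of the two encoded values with no extraneous degree of freedom, while still admitting a continuous family of joint solutions as the encoded value varies over its interval. A secondary obstacle is proving non-interference after packing all gadgets into one rectangular room, which requires both a tight lower bound on the minimum guard count (to rule out savings from a guard covering two gadgets at once) and a spatial layout guaranteeing that the visibility regions of distinct slits are cleanly separated.
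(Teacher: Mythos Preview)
Your overall shape---encode variables as guard positions on segments, enforce constraints via gadgets, then invoke continuous-bijection-from-compact-to-Hausdorff---matches the paper, and you correctly pinpoint a rigid copy gadget as the decisive new ingredient. But the route through arithmetic constraints $x_i+x_j=x_k$ and $x_i\cdot x_j=x_k$ has two real gaps. First, the reduction: the manipulations you cite show that ETR is equivalent to deciding \emph{emptiness} of such systems, which is far weaker than preserving homeomorphism type. Decomposing a polynomial via auxiliary variables is harmless (each auxiliary is a function of the originals, so the graph is homeomorphic to the base), but a compact semi-algebraic set is in general a finite \emph{union} of basic closed pieces, and there is no standard way to express a union as a conjunction of polynomial equalities while preserving the topology---the trick $\prod_j d_j=0$ would need a polynomial $d_j$ vanishing exactly on each piece $S_j$, which typically does not exist. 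Second, you would need rigid addition and multiplication gadgets inside an art gallery, and you give no construction; the paper's entire technical contribution is already just the rigid copy gadget, and equally rigid arithmetic gadgets would be a substantially harder (and here unaddressed) problem.

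The paper sidesteps both issues by a different reduction. Via Hironaka's theorem (Theorem~\ref{thmCubeComplexUniversality}), every compact semi-algebraic set is homeomorphic to the geometric realization of a cubical complex in $[0,1]^n$, hence is cut out by a Boolean combination of only the \emph{endpoint} constraints $x_i=0$ and $x_i=1$. After rewriting this formula in CNF, each disjunctive clause is enforced by a single diagonal slit whose deep point is visible iff some guard sits at the relevant endpoint of its segment; the copy gadget of Lemma~\ref{lemCopyGadget} then ties together all guards representing the same $x_i$. Thus only variable gadgets, copy gadgets, and these simple clause slits are needed---no arithmetic at all---which is precisely what lets the construction go through and be drawn explicitly (Figure~\ref{figMobiusGallery}).
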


In addition to yielding a strictly stronger result than that of Bertschinger et al.\txt{} \cite{SOSAMain}, our construction is structurally simpler: the art gallery always consists of a single rectangular room with two types of convex slits repeatedly cut out of the edges. As an example, we explicitly draw an art gallery for the M\"obius strip (to the best resolution that can reasonably fit on a page) in Figure \ref{figMobiusGallery}.

The key ingredient in our approach is a novel form of \emph{copying gadget} which enforces a constraint of the form $x_i = x_j$, thereby requiring multiple guards to represent the same underlying variable. Gadgets with similar functions can also be found in Abrahamsen et al.\txt{} \cite{ExistsRHardness}. Our gadget takes a geometrically different form, and is an improvement for the following reasons:
\begin{itemize}
	\item It is more versatile. In particular the polygon $P$ does not need to intersect the convex hull of the segments being copied. This makes it possible for the gallery to only have a single ``room.''
	\item It works in more general contexts. The previous gadget uses constraints created by the need for guards to see every point on the interior of the polygon $P$. Our gadget works in the variant of the problem where guards only need to see the boundary of the polygon. Stade \txt{} \cite{stade2022complexity} uses this gadget in a forthcoming paper proving that this variant is also $\exists\rr$-hard.
\end{itemize}

Our general construction, which we present in Section \ref{secGeneralConstruction}, does not yield an obvious bound on the size of $P$ (number of vertices) as a function of the complexity of the space $S$. In Section \ref{secEfficientConstruction}, we refine our construction to show that both the orientable and non-orientable surfaces of genus $n$ can be captured by art galleries with only $O(n)$ vertices.

\section{General construction}\label{secGeneralConstruction}

As in Bertschinger et al.\txt{} \cite{SOSAMain}, the starting point for our reduction is a theorem by Hironaka \cite{Hironaka} that any compact semi-algebraic set can be triangulated as a cubical complex. We begin by reviewing this result. Next, we introduce our variable/copying gadgets and establish its key properties. Finally, we show how to combine the gadgets to construct the art gallery. We illustrate our construction using the M\"obius strip as a running example.

\subsection{Hironaka's theorem}\label{subHironaka}

An {\it abstract cubical complex} is a subset $K$ of the set
\[I^n=\left\{\left\{\mathbf{x}\in [0, 1]^n \suchthat x_{i_1}=c_1, \dots x_{i_k}=c_k\right\} \suchthat 1\le i_1<\dots<i_k\le n, \left(c_1,\dots, c_k\right) \in \{0, 1\}^k \right\}\]
of faces of an $n$ dimensional hypercube $[0, 1]^n$ such that if $a\in K$ and $b\subseteq a$, then $b\in K$. We write $|K|:=\bigcup K \subseteq [0, 1]^n$ for the union of faces, called the {\it geometric realization} of $K$.

\begin{theorem}[Bertschinger et al.\txt{} \cite{SOSAMain}, Lemma 3]\label{thmCubeComplexUniversality}
	Any compact semi-algebraic set is homeomorphic to the geometric realization of an abstract cubical complex.
\end{theorem}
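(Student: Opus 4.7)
The plan is to reduce to a finite simplicial triangulation of $S$ and then subdivide each simplex into cubes in a way that assembles into an abstract cubical complex of the form required by the definition.

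First, I would invoke the semi-algebraic triangulation theorem (a standard consequence of Hironaka's resolution work, also due independently to \L ojasiewicz) to obtain a finite simplicial complex $T$ with $|T|$ homeomorphic to $S$; let $V = \{v_1, \dots, v_m\}$ denote the vertex set of $T$. Next, I would subdivide each simplex via the standard dual cubical subdivision: inside an $n$-simplex $\sigma = [v_{i_0}, \dots, v_{i_n}]$ with barycentric coordinates $x_{i_0}, \dots, x_{i_n}$, associate to each vertex $v_{i_j} \in \sigma$ the region
\[
C_{v_{i_j}}(\sigma) = \{\, p \in \sigma : x_{i_j}(p) \geq x_{i_k}(p) \text{ for all } k \,\},
\]
which is combinatorially a standard $n$-cube parametrized by the ratios $t_{i_k} = x_{i_k}/x_{i_j} \in [0,1]$ for $k \neq j$.

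To realize the union of these cubes as an abstract cubical complex in the sense of the definition, I would embed it into $[0,1]^V$. To the cube $C_v(\sigma)$, I assign the face of $[0,1]^V$ on which the coordinate labeled $v$ is fixed to $1$, the coordinates labeled by $\sigma \setminus \{v\}$ vary freely over $[0,1]$, and the coordinates outside $\sigma$ are fixed to $0$. Let $K$ denote the collection of all such faces together with all of their subfaces in $[0,1]^V$; this collection is closed under taking subfaces by construction, so it is an abstract cubical complex in the sense defined above.

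The main thing to verify --- and what I expect to be the technical heart of the argument --- is that the obvious piecewise-linear map sending each $C_v(\sigma) \subset |T|$ onto the corresponding axis-aligned face of $[0,1]^V$ glues into a global homeomorphism $|T| \to |K|$. The key local checks are: (i) within a single simplex $\sigma$, two cubes $C_v(\sigma)$ and $C_{v'}(\sigma)$ meet in $|T|$ along the set where $x_v = x_{v'}$ are both maximal, and on both sides this set corresponds to the subface of $[0,1]^V$ where $v = v' = 1$; and (ii) for a simplicial face $\tau \subsetneq \sigma$ with $v \in \tau$, the cube $C_v(\tau)$ maps precisely to the subface of $C_v(\sigma)$ obtained by setting the coordinates indexed by $\sigma \setminus \tau$ to $0$. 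Once these local compatibilities are confirmed, the gluing data on the two sides coincide, so the map descends to a homeomorphism $|T| \cong |K|$, yielding the desired result.
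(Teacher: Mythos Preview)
Your proposal is correct and follows the same two-step outline as the paper's proof sketch: first triangulate $S$ as a simplicial complex via Hironaka's theorem, then convert the simplicial complex into a cubical one. The paper simply outsources the second step to a citation (Blass and Holszty\'{n}ski, Theorem~1.1), whereas you carry out that construction explicitly via the standard dominant-barycentric-coordinate cubical subdivision and its embedding into $[0,1]^V$.
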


\begin{proof}[Proof sketch.]
	This fact is well-known so we do not give a complete proof here. Using Hironaka's theorem \cite{Hironaka}, it is possible to show that any compact semi-algebraic set is homeomorphic to the geometric realization of an abstract simplicial complex (see Hoffman \cite{SemiAlgebraicTriangulation}). Additionally, it can be shown that any abstract simplicial complex has an abstract cubic complex with a homeomorphic geometric realization (see Blass and Wlodzimierz \cite[Theorem 1.1]{Cubic}).
\end{proof}

For example, the circle $S^1$ can be represented as the abstract cubical complex
\[\{\{\mathbf{x} \in [0, 1]^2 \suchthat x_1 = 0\}, \{\mathbf{x} \in [0, 1]^2 \suchthat x_1 = 1\}, \{\mathbf{x} \in [0, 1]^2 \suchthat x_2 = 0\}, \{\mathbf{x} \in [0, 1]^2 \suchthat x_2 = 1\}\},\]
in which case its geometric realization is the boundary of the unit square in $\rr^2$. The M\"obius strip can be realized as a subset of the 4-dimensional hypercube as in Figure \ref{figMobiusHypercube}.\

\ipnc{0.3}{MobiusFromMathematica2}{\label{figMobiusHypercube}A 4-dimensional geometric realization of a cubical complex homeomorphic to a M\"obius strip.}

\subsection{Variable gadgets}\label{subVarGadget}

The art gallery we construct will be an axis-aligned rectangle with several slits removed from around the border. We refer to certain slits and combinations of slits as \emph{gadgets}. Our construction involves three kinds of gadgets: \emph{variable gadgets}, \emph{copying gadgets}, and \emph{clause gadgets}.

We begin by establishing the key properties of the variable gadgets, which also appear in Bertschinger et al.\txt{} \cite{SOSAMain}. A variable gadget consists of one slit in the left wall and two in the right wall, as depicted in Figure \ref{figVarGadget}.

\ipnc{.218}{VarGadgetBigText}{\label{figVarGadget}A variable gadget enforcing that a guard must be placed on the dashed gold line segment $GH$. The boundary of this section of the art gallery consists of the entire outer profile (continuing above and below the dashed lines at the top and bottom), involving line segments of all colors. No other gadgets may intersect the shaded rectangular region $W_i$.}

\begin{lemma}\label{lemVarGadget}
	If slits are drawn as in Figure \ref{figVarGadget}, with no other slits in the region $W_i$, then at least one guard must be placed within $W_i$. If there is only one guard in that region, it must be placed on line segment $GH$. Furthermore, the slits can be drawn so that the height of $W_i$ is arbitrarily small.
\end{lemma}

\begin{proof}
	To see point $F$, we must have at least one guard placed within the triangle $FIK$ (which also requires there to be some guard in $W_i$). To see point $I$, we must have at least one guard placed within the triangle $EIF$, so if there only one guard in $W_i$, it must be on the line segment $FI$. Since it must also see the point $J$, this further restricts the guard to the segment $GH$. The entire figure can be scaled in the $y$ direction to give $W_i$ arbitrarily small height without changing the dimensions of the segment $GH$.
\end{proof}

We refer to the line segments $GH$ from Lemma \ref{lemVarGadget} as \emph{guard segments}, which we will number $1, 2, \dots, n$, and we refer to $W_i$ as the \emph{guard region} of guard segment $i$. Since guard regions can be made arbitrarily small, we may arrange these guard segments any way we wish within the rectangular gallery so long as no two guard segments share the same $y$-coordinate.

\subsection{Copying gadgets}\label{subCopyGadget}

We next introduce our copying gadgets, which are based on the following geometric observation.

\begin{lemma}\label{lemInversionLinearity}
	Let $L_1$ and $L_2$ be two distinct parallel lines in the plane, and let $Z$ be a point strictly between $L_1$ and $L_2$. Let $f: L_1 \to L_2$ be the map defined via inversion through $Z$, i.e., for $p \in L_1$, $f(p)$ is the intersection of the line $pZ$ with $L_2$. Then $f$ is linear (meaning it preserves ratios of distances between points on $L_1$ and the corresponding points on $L_2$).
\end{lemma}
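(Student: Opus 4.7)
The plan is to set up convenient coordinates and write $f_Z$ explicitly as an affine map, from which preservation of ratios is immediate. After a rigid motion (which preserves all distances and hence ratios), I may assume $L_1$ is the line $y = h_1$ and $L_2$ is the line $y = -h_2$ with $h_1, h_2 > 0$, and $Z$ is the origin; this is allowed precisely because $Z$ lies strictly between $L_1$ and $L_2$, so $h_1$ and $h_2$ have the same sign.

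Next, for a point $p = (x, h_1) \in L_1$, I parameterize the line $pZ$ as $(tx, th_1)$ and solve $th_1 = -h_2$ to find $t = -h_2/h_1$. Thus
\[ f_Z(x, h_1) = \left(-\tfrac{h_2}{h_1} x,\; -h_2\right), \]
which, when we use $x$ (resp.\ the first coordinate) as an affine parameter on $L_1$ (resp.\ $L_2$), is an affine function of $x$ with slope $-h_2/h_1$. Since affine maps between one-dimensional affine spaces preserve ratios of signed distances, the conclusion follows by taking three collinear points $p_1, p_2, p_3 \in L_1$ with parameters $x_1, x_2, x_3$ and computing
\[ \frac{|f_Z(p_1) - f_Z(p_2)|}{|f_Z(p_2) - f_Z(p_3)|} = \frac{|x_1 - x_2|}{|x_2 - x_3|} = \frac{|p_1 - p_2|}{|p_2 - p_3|}. \]

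There is essentially no obstacle here: the result is a restatement of the classical fact that central projection between parallel lines is a homothety (with ratio $-h_2/h_1$), provable equally well by similar triangles. If a coordinate-free write-up is preferred, I would instead drop the perpendicular from $Z$ to both lines, meeting $L_1$ at $A$ and $L_2$ at $B$, and for any $p \in L_1$ use the similar triangles $\triangle Z A p$ and $\triangle Z B f_Z(p)$ to obtain $|B f_Z(p)| = (h_2/h_1)|Ap|$ with orientations reversed; preservation of ratios then follows by subtraction. Either route is a few lines, so I would pick whichever matches the paper's notational style in the sections that follow.
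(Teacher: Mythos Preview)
Your proposal is correct and essentially the same as the paper's proof: the paper also aligns the lines with the $x$-axis and argues via similar triangles (your alternative route) that the induced map on $x$-coordinates scales distances by a fixed ratio, hence preserves ratios. Your coordinate computation showing $f_Z$ is the homothety $x \mapsto -(h_2/h_1)x$ is just an equivalent packaging of the same similar-triangles observation.
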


\ipnc{0.66}{SimilarTriangles2}{\label{figSimilarTriangles}Illustration of the geometry underlying the copying gadget.}

\begin{proof}
	Assume $L_1$ and $L_2$ are drawn aligned with the $x$-axis as in Figure \ref{figSimilarTriangles}. Let $A$, $B$, and $C$ be three arbitrary distinct points on $L_1$, and let $F$, $E$, and $D$ be their respective images under $f$. It is easy to see that $(AZB, FZE)$ and $(AZC, FZD)$ are pairs of similar triangles. This means that the ratios of lengths of $AB$ to $FE$ and $AC$ to $FD$ are equal, that is $\frac{B_x-A_x}{F_x-E_x}=\frac{C_x-A_x}{F_x-D_x}$. Rearranging, we have $\frac{B_x-A_x}{C_x-A_x}=\frac{F_x-E_x}{F_x-D_x}$.
\end{proof}

We now discuss the key properties of the copying gadgets, which are depicted in Figure \ref{figCopyGadget}.

\begin{lemma}\label{lemCopyGadget}
	Suppose guard segments $GH$ and $NO$ are horizontally aligned and parallel as in Figure \ref{figCopyGadget}, with disjoint slits $CBAD$ and $SUVT$ lying outside of the guard regions such that $AB$, $UV$, $GH$ and $NO$ are all parallel. Furthermore, suppose $CBAD$ is chosen so the triples $(A, C, H)$, $(B, C, G)$, $(A, D, O)$ and $(B, D, N)$ are each colinear, and $SUVT$ is chosen similarly, as in the figure. Then any valid configuration of guards with one guard in each guard region must have two guards sharing an $x$-coordinate, one on $GH$ and one on $NO$. Furthermore, if the left wall is moved sufficiently far to the left, $CBAD$ can be placed anywhere above $W_i$ not in a guard region, $SUVT$ can be placed anywhere below $W_j$ not in a guard region, and both of these slits can be made arbitrarily small.
\end{lemma}

\begin{proof}
	Let $\alpha$ be the location of the guard on $GH$, and let $\beta$ be the location of guard on $NO$; our objective is to show that, in any valid configuration, $\alpha_x = \beta_x$. We define the following four additional points:
	\begin{itemize}
		\item $\overline{\alpha} :=$ the intersection of line segment $AB$ with line $\alpha C$
		\item $\underline{\alpha} :=$ the intersection of line segment $UV$ with line $\alpha S$
		\item $\overline{\beta} :=$ the intersection of line segment $AB$ with line $\beta D$
		\item $\underline{\beta} :=$ the intersection of line segment $UV$ with line $\beta T$
	\end{itemize}
	
	\ipnc{.212}{CopyGadgetWithRegionsBigText}{\label{figCopyGadget}A copying gadget (and two variable gadgets). As in Figure \ref{figVarGadget}, the boundary consists of the entire outer profile. The copying gadgets force the pair of guards on segments $GH$ and $NO$ to have the same $x$-coordinate. Additional variable and copying gadgets may be placed between $W_i$ and $W_j$, which would be easier to draw if $W_i$ and $W_j$ were made narrower; they are drawn very large in this figure for the purpose of illustration.}
	
	Due to the obstructions by points $C$ and $D$, the guard at $\alpha$ can only see $AB$ to the left of $\overline{\alpha}$, and the guard at $\beta$ can only see $AB$ to the right of $\overline{\beta}$. If $\overline{\alpha}_x < \overline{\beta}_x$, the line segment $\overline{\alpha}\overline{\beta}$ will therefore not be seen by any guard, so we must have that $\overline{\alpha}_x \geq \overline{\beta}_x$. Similarly, the guard at $\alpha$ can only see $UV$ to the right of $\underline{\alpha}$, and the guard at $\beta$ can only see $UV$ to the left of $\underline{\beta}$, so we must have that $\underline{\alpha}_x \leq \underline{\beta}_x$. Thus, in any valid configuration of the two guards,
	{\allowdisplaybreaks\begin{align*}
		\frac{\alpha_x - G_x}{H_x - G_x} &= \frac{B_x - \overline{\alpha}_x}{B_x - A_x}\stext{by Lemma \ref{lemInversionLinearity} with $Z := C$}\\
		&\leq \frac{B_x - \overline{\beta}_x}{B_x - A_x}\\
		&= \frac{\beta_x - N_x}{O_x - N_x}\stext{by Lemma \ref{lemInversionLinearity} with $Z := D$}\\
		&= \frac{V_x - \underline{\beta}_x}{V_x - U_x}\stext{by Lemma \ref{lemInversionLinearity} with $Z := T$}\\
		&\leq \frac{V_x - \underline{\alpha}_x}{V_x - U_x}\\
		&= \frac{\alpha_x - G_x}{H_x - G_x}\stext{by Lemma \ref{lemInversionLinearity} with $Z := S$}.
	\end{align*}}
	Multiplying through by $(H_x - G_x)$ and using the fact that $N_x = G_x$ and $O_x = H_x$, we have
	\[\alpha_x - G_x \leq \beta_x - G_x \leq \alpha_x - G_x.\]
	Thus, $\alpha_x = \beta_x$ in any valid configuration (and it is obvious that such a configuration is indeed valid).
	
	For the final claim, observe that we can fully define the position of the top slit $CBAD$ as follows. Place $AB$ anywhere outside the art gallery, not in any guard region. Then move the left wall (including all slits) sufficiently to the left so that the $y$-coordinates of points $C$ and $D$ are sufficiently close to $A$ and $B$ so that they do not lie in any guard region either. Existing slits will have to get stretched in this process, which is fine because this only makes them smaller. Thus, it is possible to make the top slit $CBAD$ arbitrarily small and place it anywhere, so long as it is above $W_i$ and not in any other guard region. We can apply a symmetric procedure for $SUVT$.
\end{proof}

\subsection{Constructing the art gallery}

We are now ready to prove homeomorphism universality. Throughout the proof, as an example, we implement the various steps the construction for the M\"obius strip.

\begin{proof}[Proof of Theorem \ref{thmMain}]
	By Theorem \ref{thmCubeComplexUniversality}, we may assume without loss of generality that $S$ is the geometric realization of a nonempty cubical complex. This means that $S$ can be described as a subset of $[0, 1]^n$ whose coordinates $\seq{x}{n}$ satisfy some disjunctive normal form (DNF) formula
	\[\phi_S \equiv \bigvee_{j = 1}^{m'} C'_j,\]
	where each $C'_i$ is a conjunction of constraints that certain $x_i$ variables take values 0 and 1. The formula will always be in DNF: each $C'_i$ corresponds to a face of the cubical complex, so $\phi_S$ describes the set of points lying in at least one such face. For example, the formula for the M\"obius strip $M$ depicted in Figure \ref{figMobiusHypercube} has six clauses, one for each of the six 2-dimensional faces. To write an explicit formula $\phi_M$, we choose a coordinate system where $x_4 = 0$ corresponds to the points on the outer shell and enumerate the faces as clauses by traversing the strip starting from the left-most depicted face and proceeding next toward the back face:
	\begin{align*}
		\phi_M \equiv& (x_2 = 0 \wedge x_4 = 0) \vee (x_1 = 0 \wedge x_4 = 0) \vee (x_1 = 0 \wedge x_3 = 1)\\
		\vee& (x_3 = 1 \wedge x_4 = 1) \vee (x_2 = 0 \wedge x_4 = 1) \vee (x_2 = 0 \wedge x_3 = 0)
	\end{align*}
	
	We next rewrite $\phi_S$ in conjunctive normal form (CNF),
	\[\phi_S \equiv \bigwedge_{j = 1}^m C_i,\]
	where each $C_i$ is a disjunction of constraints, i.e.,
	\[C_i \equiv (x_{i_{j, 1}} = c_{j, 1}) \vee (x_{i_{j, 2}} = c_{j, 2}) \vee \dots \vee (x_{i_{j, \ell}} = c_{j, \ell}),\]
	with each $i_{j, k} \in \{1, 2, \dots n\}$ and $c_{j, k} \in \{0, 1\}$. The transformation from DNF to CNF is standard, and can be accomplished by enumerating all tuples of constraints that take one constraint from each clause. For example, since $\phi_M$ has 6 clauses in DNF, each of size 2, the translation to CNF produces $2^6 = 64$ clauses, each of size 6. However, after eliminating redundancies, we can simplify $\phi_M$ to
	\begin{align}
		\phi_M \equiv& (x_1 = 0 \vee x_2 = 0 \vee x_3 = 1)\nonumber\\
		\wedge& (x_2 = 0 \vee x_3 = 1 \vee x_4 = 0)\nonumber\\
		\wedge& (x_1 = 0 \vee x_2 = 0 \vee x_4 = 1) \label{equMobiusCNF}\\
		\wedge& (x_3 = 0 \vee x_3 = 1 \vee x_4 = 0 \vee x_4 = 1)\nonumber\\
		\wedge& (x_1 = 0 \vee x_3 = 0 \vee x_4 = 0 \vee x_4 = 1)\nonumber.
	\end{align}
	
	\ipnc{0.79}{Schematic}{\label{figSchematic}Illustration of the first step of the general construction (making clause gadgets), starting from a CNF formula with $n$ variables and $m$ clauses. The full art gallery is shown in more detail for the M\"obius strip in Figure \ref{figMobiusGallery}.}
	
	Starting from a rectangular art gallery, we make a narrow diagonal slit for each clause $j$ in the top-right corner such that the regions $R_j$ of the gallery that can see to the end of each slit extend downward to the left and do not overlap, as shown in Figure \ref{figSchematic}. For each $i \in \{1, 2, 3, \dots, n\}$ we define $X_i$ to be a tall, skinny, axis-aligned rectangular region such that the convex hulls of the sets $X_i \cap (R_1 \cup R_2 \cup \dots \cup R_m)$ do not overlap in $y$-coordinates (in terms of Figure \ref{figSchematic}, the pink dashed lines must not overlap in $y$-coordinates). It is always possible to guarantee this non-overlapping property by making the gallery sufficiently tall and/or making the clauses sufficiently close together.
	
	In every clause $j$, for every constraint $x_{i_{j, k}} = c_{j, k}$, we add a variable gadget in the left and right sides of the rectangle enforcing the constraint that there is a guard on a guard segment spanning the width of $X_{i_{j, k}}$ whose endpoint lies within $R_j$. If $c_{j, k} = 0$, this will be the left endpoint, and if $c_{j, k} = 1$, this will be the right endpoint. We then add copying gadgets to enforce that all guard segments placed in $X_i$ for the same $i$ must have the same $x$-coordinate. By Lemmas \ref{lemVarGadget} and \ref{lemCopyGadget}, this is possible by shrinking the variable and copying gadgets and moving the left wall sufficiently far away.
	
	\ipnc{0.79}{MobiusGallery8}{\label{figMobiusGallery}An art gallery whose solution space is homeomorphic to a M\"obius strip, using the formula $\phi_M$ from (\ref{equMobiusCNF}). The horizontal gold dashes above each variable are the line segments on which guards must walk, as in Figure \ref{figCopyGadget}. The pink diagonal lines depict the regions $R_j$ that can see to the end of each clause gadget slit; each of these regions must contain at least one guard. The left wall must be placed sufficiently far to the left to ensure that none of the variable/copy gadgets interfere with each other. As drawn, this particular art gallery has 183 vertices.}
	
	Letting $P_0$ be the polygon $P$ without the clause gadget slits, we have that $V(P_0)$ consists of all solutions with one guard on each guard segment, with guards within the same $X_i$ placed at the same $x$-coordinate. There is thus a natural homeomorphism $h: V(P_0) \to [0, 1]^n$, and from the way the clause gadgets were constructed, it clearly follows that $V(P)$ consists of all solutions in which $\phi_S$ is satisfied under $h$. Thus, $V(P)$ is homeomorphic to $S$.
\end{proof}

\section{Efficient construction for closed surfaces}\label{secEfficientConstruction}

We have argued that our universality construction is qualitatively simpler than its predecessors. In this section, we show how our technique can be used to produce quantitatively simple galleries, in terms of the number of vertices of the polygon. This is not apparent a priori: even if a space can be triangulated as a cubical complex with relatively few faces, the conversion from DNF to CNF can exponentially blow up the number of clauses, and thus the size of the art gallery. Here we prove that an important class of topological spaces, namely the closed surfaces, can occur as solutions to art galleries with linearly many vertices.

\begin{theorem}
	There are polygons $P_g$, $Q_g$ with $O(g)$ vertices such that $V(P_g)$ is homeomorphic to the closed orientable surface of genus $g$ and $V(Q_g)$ is homeomorphic to the closed non-orientable surface of genus $g$.
\end{theorem}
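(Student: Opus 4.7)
The plan is to sidestep the possibly exponential DNF-to-CNF blow-up by finding, for each $n$, a cubical-complex representation of $\Sigma_n$ (the closed orientable surface of genus $n$) and $N_n$ (the closed non-orientable surface of genus $n$) whose CNF encoding has total literal count $O(n)$, then apply the construction from the proof of Theorem~\ref{thmMain}, whose output polygon has vertex count proportional to this literal count.

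The starting observation for the orientable case is that the torus $T^2 = S^1 \times S^1$ admits a cubulation in $[0,1]^4$ whose CNF is simply $(x_1 \in \{0,1\} \vee x_2 \in \{0,1\}) \wedge (x_3 \in \{0,1\} \vee x_4 \in \{0,1\})$, two clauses of size four. I would extend this to $\Sigma_n$ via the connected-sum decomposition $\Sigma_n = T^2 \# \cdots \# T^2$: place $n$ copies of the torus complex on disjoint four-coordinate blocks of $[0,1]^{O(n)}$, and realize each connected sum between consecutive tori with a short ``bridge'' sub-complex contributing only $O(1)$ additional coordinates and clauses. A parallel construction should handle $N_n$ using the decomposition $N_n = \rp{2} \# N_{n-1}$ starting from a fixed cubulation of $\rp{2}$.

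The hard part will be implementing connected sum as a cubical-complex operation while controlling the CNF size, since removing an open disk from each factor and gluing along the resulting boundary circles is not a native cubical-complex move. Simulating it combinatorially so that the new clauses stay $O(1)$ per gluing, and so that the result is genuinely homeomorphic (not just homotopy-equivalent) to the connected sum, will require careful cell-by-cell bookkeeping. If that route becomes too delicate, a cleaner alternative is to build the gallery directly from the fundamental-polygon description, using one variable gadget per edge of the $4n$-gon for $\Sigma_n$ (resp.\ $2n$-gon for $N_n$) and a constant number of copying gadgets per edge identification. The versatility of the copying gadget highlighted in Section~\ref{subCopyGadget}, namely that it can link variables whose slits need not lie near each other in the gallery, makes this direct approach plausible and automatically yields the desired $O(n)$ vertex bound.
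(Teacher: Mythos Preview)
Your proposal shares the paper's high-level idea of exploiting the connected-sum decomposition $\Sigma_n = T^2 \# \cdots \# T^2$ (resp.\ $N_n = \rp{2} \# \cdots \# \rp{2}$), but the implementation diverges, and the divergence is where the gap lies. You propose placing the $n$ summands on disjoint coordinate blocks of $[0,1]^{O(n)}$ and staying entirely within the cubical-complex framework of Theorem~\ref{thmMain}. But with $O(n)$ ambient coordinates, each face of such a complex already fixes $O(n)$ coordinates, so every DNF clause has $O(n)$ literals; you give no argument that the resulting CNF has only $O(n)$ \emph{total} literals, and the ``bridge'' step realizing connected sum as a cubical-complex operation with $O(1)$ overhead per gluing is, as you yourself flag, left entirely open. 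The fundamental-polygon alternative is likewise not a proof: the copying gadget of Section~\ref{subCopyGadget} equates the $x$-coordinates of two guards on parallel horizontal segments, and it is not at all clear how this primitive would enforce the edge identifications of a $4n$-gon (which involve orientation reversals and a single-vertex identification of all corners).

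The paper's route avoids both difficulties via an idea you do not anticipate: it uses a \emph{constant} number of variables, independent of $n$. One fixed cubical complex $C$ for $R \in \{T^2, \rp{2}\}$ lives in variables $x_1,\dots,x_j$, and a single additional variable $x_0$ serves as a ``time axis'' along which $n$ copies of $R$ (with one or two $2$-faces removed) are stacked at levels $0 = k_0 < k_1 < \cdots < k_{n-1} = 1$, with the connecting tubes cut out by clauses of the form $\mathbf{x}\in|B_i| \vee (k_0 \le x_0 \le k_1) \vee (k_2 \le x_0 \le k_3) \vee \cdots$. This yields a CNF with a \emph{constant} number of clauses, each of size $O(n)$, hence $O(n)$ total literals. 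The price is that the literals $k_i \le x_0 \le k_{i+1}$ are not of the form $x_i = c$, so the space is no longer a cubical complex and Theorem~\ref{thmMain} does not apply verbatim; the paper modifies the gallery (moving the clause gadgets to the right wall) and introduces a new arrangement of guard segments for $x_0$ that realizes these interval constraints directly. That extension of the gadget repertoire is the missing ingredient in your plan.
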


\begin{proof}
	We know by Theorem \ref{thmMain} that such polygons exist for the finite number of cases where $g \in \{0, 1\}$, so it is sufficient to construct these polygons only for $g \geq 2$.
	
	It is well known that, for $g \geq 2$, the orientable surface of genus $g$ can be obtained as the connected sum of $g$ copies of a torus, $T^2 \# T^2 \# \dots \# T^2$, while the non-orientable surface of genus $g$ can be obtained as the connected sum of $g$ copies of the real projective plane $\rr\mathbb{P}^2 \# \rr\mathbb{P}^2 \# \dots \# \rr\mathbb{P}^2$ (see, e.g., the textbook by Massey \cite{SurfaceClassification}). The connected sum $R \# R$ is ordinarily defined as removing an open disk from two copies of $R$ and gluing their boundaries together. For 2-dimensional surfaces, this is equivalent to gluing the disk boundaries to opposite ends of a cylindrical tube, which is the formulation we use in this construction. Thus, let $R$ be either $T^2$ or $\rr\mathbb{P}^2$. By Theorem \ref{thmCubeComplexUniversality}, we know there is some $j$ such that $R$ is homeomorphic to $\abs{C}$, the geometric realization of a cubical complex $C$ with $j$ variables. Let $\seq{x}{j}$ be these variables and write $\mathbf{x}=(\seq{x}{j})$. 
	
	Clearly $C$ has at least two $2$-dimensional faces. Let $C_1$ be a cubical complex obtained by removing a face $f_1$ of dimension $2$ from $C$, and $C_2$ obtained by removing a different face $f_2$. For $i \in \{1, 2\}$, define $B_i$ to be the cubical complex in $\mathbf{x}$ consisting of $f_i$ and its boundary, so that $|C_i|\cap |B_i|$ is the boundary of the removed disc (see Figure \ref{figConnectSumParts}).
	
	\ipnc{.7}{ConnectSumParts3}{\label{figConnectSumParts} Visualisation of $|C_1|$, $|C_2|$, $|B_1|$ and $|B_2|$. In reality, $C$ should be a complex for $T^2$ or $\rp{2}$, rather than $S^1$ as shown. The figure is just meant to convey the construction at a schematic level.}
	
	Define a new variable $x_0$ and fix constants $k_0=0<k_1<\dots<k_{g-2}<1=k_{g-1}$. Then the following formulas express the property that a point $\mathbf{x}$ is contained in $R \# R \# \dots \# R$ (see Figure \ref{figConnectSumExample} for a visualization). For $g$ even we have
	\begin{align*}
		&(\mathbf{x}\in|C_2|\vee x_0=0)\wedge (\mathbf{x}\in|C_1|\vee x_0=1)\wedge\\ &(\mathbf{x}\in|B_1|\vee 0\le x_0\le k_1\vee k_2\le x_0\le k_3\vee\dots\vee k_{g-3}\le x_0\le k_{g-2}\vee x_0=1)\wedge\\
		&(\mathbf{x}\in|B_2|\vee k_1\le x_0\le k_2\vee k_3\le x_0\le k_4\vee \dots\vee k_{g-2}\le x_0\le 1),
	\end{align*}
	and for $g$ odd we have
	\begin{align*}
		&(\mathbf{x}\in|C_2|\vee x_0=0\vee x_0=1)\wedge \mathbf{x}\in|C_1|\wedge\\
		&(\mathbf{x}\in|B_1|\vee 0\le x_0\le k_1\vee k_2\le x_0\le k_3\vee\dots\vee k_{g-2}\le x_0\le 1)\wedge\\
		&(\mathbf{x}\in|B_2|\vee k_1\le x_0\le k_2\vee k_3\le x_0\le k_4\vee \dots\vee k_{g-3}\le x_0\le k_{g-2}\vee x_0=1).
	\end{align*}
	Note that such a space is typically not the geometric realization of a cubical complex in $x_0, \seq{x}{j}$ because of the constraints on $x_0$; nevertheless, we will construct an art gallery for it.
	
	\ipnc{.7}{ConnectSumExample3}{\label{figConnectSumExample} A schematic visualization for our constructions of $R \# R \# \dots \# R$ for even $g$ (left) and odd $g$ (right). For all values of $x_0$, we require $\mathbf{x} \in |C_1|$ and $\mathbf{x} \in |C_2|$, except for $x_0=0$ or $x_0=1$ where one of these constraints is dropped to cap the hole at the end. For $k_i<x_0<k_{i+1}$, $\mathbf{x}$ must be in either $|B_1|\cap|C_1|$ or $|B_2|\cap|C_2|$ depending on the parity of $i$. This creates tubes connecting the copies of $R$.}
	
	We can write these expressions in CNF with terms of form $x_i=0$, $x_i=1$, or $k_i\le x_0\le k_{i+1}$. We write the constraint $\mathbf{x}\in |B_1|$ in CNF as
	\[\phi_{B_1} := \bigwedge_{\ell=1}^{p}\bigvee_{m=1}^{q_\ell} t_{\ell,m},\]
	where each $t_{\ell,m}$ is an atomic constraint of the form e.g. $x_i=c$. We may thus rewrite
	\begin{align*}
		&\mathbf{x}\in|B_1|\vee 0\le x_0\le k_1\vee k_2\le x_0\le k_3\vee\dots\vee k_{g-2}\le x_0\le 1\\ \equiv&\bigwedge_{\ell=1}^{p}\left(\left(\bigvee_{m=1}^{q_\ell} t_{\ell, m}\right)\vee 0\le x_0\le k_1\vee k_2\le x_0\le k_3\vee\dots\vee k_{g-2}\le x_0\le 1\right).
	\end{align*}
	
	To add the constraints on $x_0$ to each clause involving $x_0$, we must make a small modification to the construction from Section \ref{secGeneralConstruction}. Instead of having the clause gadgets along the top wall, we extend the top wall upward and put these gadgets along the right wall. It is easy to see that this does not affect our ability to create the constraints on $\mathbf{x}$. We then add the constraints on $x_0$ to each clause as in Figure \ref{figInequalityClause}; if all clause gadgets are vertically translated copies of each other, then all guard segments for $x_0$ across the different clauses can be placed at the same $x$-coordinates.
	
	\ipnc{.7}{InequalityClause2}{\label{figInequalityClause} Guard segments for $x_0$. We place the first guard segment so that the lower line in the wedge intersects it a $0$, and the upper line intersects it at some point $k_1$. We can then place another segment which intersects the \emph{lower} line at $k_1$, and the intersection of this segment with the upper line gives the value of $k_2$. We can repeat until we have $g-1$ segments. Each clause depending on $x_0$ will have a copy of this setup but with some of the segments removed depending on which terms $k_i \leq x_0 \leq k_{i+1}$ it contains.}
	
	Using a similar expansion for the other terms, we obtain a CNF expression that cuts out the space we want. Since $p$ and each $q_\ell$ are constants, the total number of clauses does not depend on $g$, and each clause has at most $O(g)$ terms. Thus, when we add the constraints involving $x_0$ to the constant-sized formula $\phi_{B_1}$, the number of vertices increases linearly in $g$.
	
	We may similarly add $x_0$ constraints to the formula for $B_2$ with linear blowup. The result is a gallery with $O(g)$ vertices whose solution space is homeomorphic to $R \# R \# \dots \# R$.
\end{proof}

We have shown that such $P_g$, $Q_g$ exist having $O(g)$ vertices. In case it is of interest, we leave it to the reader to verify that the positions of the vertices can additionally be chosen to be rational numbers that require only $O(g)$ bits to describe.\footnote{In general, if we fix the length of the guard segments for $x_0$ then the wedge parameters that would give $k_{g-1}=1$ are not rational numbers. Instead, we fix a sufficiently thin slit and choose the length of the guard segments for $x_0$ appropriately.}

\section{Conclusion}\label{secConclusion}

In this work we have settled the open question of Bertschinger et al.\txt{} \cite{SOSAMain} by showing that solution spaces to the art gallery problem can capture the topology of any semi-algebraic set up to homeomorphism. In doing so, we have introduced a new form of copying gadget that enables simpler arguments about the structure of valid solutions to art gallery problem instances.

Beyond the art gallery problem, our main result raises intriguing possibilities for the broader theory of $\exists\rr$-hardness. To the best of our knowledge, this is the first paper showing that the topological structure of semi-algebraic sets can be carried into a different problem domain even up to the fine-grain notion of homeomorphism.\footnote{We are aware of one similar work by Dobbins, Holmsen, and Miltzow \cite{NPP} for the Nested Polytope Problem, which is also $\exists\rr$-complete. They show universality up to rational homeomorphism for bounded real varieties, which is a smaller class than semi-algebraic sets.} Perhaps this holds for other $\exists\rr$-hard problems as well.

\section*{Acknowledgments}

We are very grateful to Simon Weber for his helpful, detailed feedback on an earlier version of this paper. We are also grateful to Tillmann Miltzow for numerous helpful suggestions. This material is based upon work supported by the National Science Foundation Graduate Research Fellowship Program under Grant No.\txt{} DGE1745303. Any opinions, findings, and conclusions or recommendations expressed in this material are those of the authors and do not necessarily reflect the views of the National Science Foundation.

\bibliographystyle{plain}
\bibliography{bibliography}

\end{document}